\newcounter{multifig}
\newtheorem{theorem}{Theorem}[section]
\newtheorem{lemma}[theorem]{Lemma}
\newtheorem{definition}[theorem]{Definition}
\newtheorem{remark}[theorem]{Remark}
\newtheorem{example}[theorem]{Example}
\title{\LARGE \bf Communication-efficient Distributed Multi-resource Allocation}
\author{Syed Eqbal Alam$^\ast$\thanks{$^\ast$Concordia Institute for Information Systems Engineering,
		Concordia University, 
		Montreal, Quebec, Canada, email:  sy\_al@encs.concordia.ca, jiayuan.yu@concordia.ca},
		Robert Shorten$^\dagger$\thanks{$^\dagger$School of Electrical, Electronic
		and Communications Engineering, University College Dublin, Dublin, Ireland, email: robert.shorten@ucd.ie},
	Fabian Wirth$^\ddagger$\thanks{$^\ddagger$Faculty of Computer Science and Mathematics, University
		of Passau, Passau, Germany, email:  fabian.wirth@uni-passau.de}, and
	Jia Yuan Yu$^\ast$
}
\begin{document}

\maketitle
\thispagestyle{empty}
\pagestyle{empty}

\begin{abstract}
	In several smart city applications, multiple resources must be
allocated among competing agents that are coupled through such shared resources and are constrained --- either through limitations of communication
infrastructure or privacy considerations. We propose a distributed algorithm to solve such
	distributed multi-resource allocation problems with no direct
	inter-agent communication. We do so by extending a recently introduced
	additive-increase multiplicative-decrease (AIMD) algorithm, which only
	uses very little communication between the system and agents. Namely,
	a control unit broadcasts a one-bit signal to agents whenever one of
	the allocated resources exceeds capacity. Agents then respond to this
	signal in a probabilistic manner. In the proposed algorithm, each
	agent makes decision of its resource demand locally and an agent is unaware of the resource allocation of other agents. In empirical results, we observe that the average allocations converge over time to optimal allocations.
\end{abstract}
	\textbf{\textit{Keywords---} distributed optimization, optimal control, multi-resource allocation, AIMD algorithm, smart city, Internet of things, multi-camera coordination system}
\let\thefootnote\relax\footnotetext{To appear in IEEE International Smart Cities Conference (ISC2 2018), Kansas City, USA, September, 2018.  }	
\section{Introduction}
Smart cities are built on smart infrastructures like intelligent transportation systems, security systems, smart grids, smart hospitals, smart waste management systems, etc., \cite{Harrison2010, Zanella2014}. Internet of things (IoT) are the essential building blocks to develop such smart infrastructures \cite{Hernandez2011, Mohanty2016}, we call these devices as {\em Internet-connected devices (ICDs)}. In several smart city applications, multiple resources must be allocated among competing Internet-connected devices that are coupled through multiple resources. Generally speaking, such problems are more difficult to solve than those with a
single resource. This is particularly true when Internet-connected devices are constrained --- either through limitations of communication infrastructure, or due to privacy considerations. These distributed optimization problems have numerous applications in smart cities and other application areas.
The recent literature is rich with
algorithms that are designed for distributed control and optimization
applications. While this body of work is too numerous to enumerate, we point the interested readers to the works of Nedic \cite{Nedic2009},\cite{Nedic2011}; Cortes \cite{KIA2015}; Jadbabaie and
Morse\cite{Jadbabaie2003}; Bullo \cite{Bullo2011}; Pappas
\cite{Pappas2017}, Bersetkas~\cite{Bertsekas2011}; Tsitsiklis
\cite{Blondel2005} for recent contributions.  A survey of some of
the related work is given in \cite{Wirth2014}.  

 In many instances in smart cities and other areas, network  of Internet-connected devices
achieve optimal allocation of resources through regular communication
with each other and/or with a control unit. Motivated by such
scenarios, we propose an algorithm that is tailored for these but does not require inter-device communication due to privacy considerations.
 The proposed solution is based on the generalization of stochastic {\em additive-increase and multiplicative-decrease (AIMD)} algorithm \cite{Wirth2014}. By way of background, the AIMD algorithm was proposed in the context of congestion avoidance in transmission control protocol (TCP) \cite{Chiu1989}. The AIMD algorithm is further explored and used in several application domains for example, micro-grids \cite{Crisostomi2014}; multimedia \cite{Cai2005}; electric vehicle (EV) charging \cite{Studli2012}; resource allocation \cite{Avrachenkov2017}, etc. Interested readers can refer the recent book by Corless et al.~\cite{Corless2016} for an overview of some of the applications. The authors of \cite{Wirth2014} demonstrate that simple algorithms from Internet congestion control can be used to solve certain optimization problems. 
 Roughly speaking, in \cite{Wirth2014}, the iterative distributed optimization
algorithm works as follows. Internet-connected devices continuously acquire an increasing
share of the shared resource, this phase is called {\em additive increase} phase. When the aggregate resource demand of Internet-connected devices exceeds the total capacity of resource, then the control unit broadcasts a one bit
{\em capacity event} notification to all competing Internet-connected devices and these devices
respond in a probabilistic manner to reduce the demand, this phase is called {\em multiplicative decrease} phase. By judiciously selecting the probabilistic manner in which Internet-connected devices respond, a portfolio of
optimization problems can be solved in a stochastic and distributed
manner.

 Our contribution here is to demonstrate
that the ideas therein  \cite{Wirth2014} extend to a much broader (and more useful)
class of optimization problems which can be used in many application domains of smart cities and other areas. Our proposed algorithm builds
on the choice of probabilistic response strategies described therein
but is different in the sense that we generalize the approach to deal
with {\em multiple resource constraints} and the cost functions are coupled through multiple resources. We show that the optimal values obtained by proposed algorithm is same as if the optimization problem is solved in a centralized way. 

In the proposed solution, for a system with $m$ resources, in the worst case scenario the communication overhead is $m$ bits per time unit, which is very low. We would also like to mention that in the proposed solution, the communication complexity is independent of the number of Internet-connected devices competing for resources in the system. In this paper, we present a use case of a smart city that deploys a multi-camera coordination system, in which several cameras coordinate for the surveillance of the city. Each camera has private cost function which is coupled through allocation of multiple resources. Notice that we use the names agent and Internet-connected device interchangeably in this paper.
%

The paper is organized as follows, Section \ref{prob_form} describes the problem and provides the formulation of the problem, it also describes the conditions for optimality. A brief description of classical AIMD algorithm is presented in Section \ref{prelim}. Section \ref{divisible_mul_res} describes the multi-resource allocation strategies. The numerical results are presented in Section \ref{results}. The paper concludes with future directions in Section \ref{conc}. 

\section{Problem formulation}  \label{prob_form}
Suppose that a smart city deploys a {\em multi-camera coordination system} described in Figure \ref{Diag_camera}, in which several cameras work together for the surveillance of the city, these cameras are deployed at different locations. If a camera observes any unusual activity then it should demand the required amount of resources with higher probability than other cameras, to notify the observed activity immediately. Suppose that there are central servers set up by the city, which store and process the videos sent by all the cameras, these servers also act as a control unit. Each camera requires different amount of network bandwidth, CPU cycles, memory (RAM) and storage to transmit, process and store the videos on the central servers. Assume that a camera decides its demand based on its cost function and its previous allocations.
\begin{figure}[H]
	\centering
	\includegraphics[width=0.6\textwidth,clip=true,trim=8.5cm 11.5cm .025cm 4.2cm]{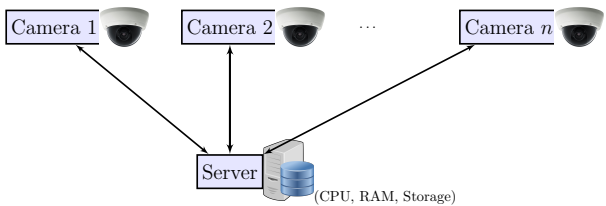}
	\caption{Multi-camera coordination system}
	\label{Diag_camera}
\end{figure}
Suppose that there are $n$ Internet-connected devices that compete for $m$ resources
$R^1, R^2, \ldots, R^m$ with capacity $C^1, C^2, \ldots, C^m$,
respectively. In this paper we assume that these Internet-connected devices are the cameras that compete for  memory (say $R^1$), storage ($R^2$) and network bandwidth ($R^3$). We further assume that each camera receives enough CPU cycles to process its data, for the sake of generality we use $m$ resources here. We denote $\mathcal{N}:=\{1, 2, \ldots, n\}$ and $\mathcal{M}:=\{1, 2, \ldots, m\}$ and use $i \in \mathcal{N}$ as an index for cameras and $j \in \mathcal{M}$ to index the resources. We assume that each camera has a private cost
function $f_i: \mathbb{R}^m \to \mathbb R$ which associates a cost to a
certain allotment of resources. We
assume that $f_i$ is twice continuously differentiable, convex, and
increasing in all variables, for all $i$. For all $i$ and $j$, we denote by $x_i^j \in \mathbb{R}_+$ the
amount of resource $R^j$ allocated to camera $i$. We are interested in the
following optimization problem of {\em multi-resource}
allocation:
\begin{align} \label{obj_fn1}
\begin{split}
\min_{{x}^1_1, \ldots, {x}^m_n} \quad &\sum_{i=1}^{n} f_i(x^1_i, x^2_i,
\ldots, x^m_i),    		\\
\mbox{subject to} \quad
&\sum_{i=1}^{n} x^j_i = C^j, \quad j \in \mathcal{M},		\\
&x^j_i \geq 0, \quad i \in \mathcal{N}, \ j \in \mathcal{M}.
\end{split}
\end{align}
Note that there are $nm$ decision variables $x^j_i$ in this optimization problem, for all $i$ and $j$.
We denote the solution to the minimization problem by
$x^{*} \in \mathbb{R}_+^{nm}$, where $x^* = (x_1^{*1}, \ldots, x_n^{*m})$. By compactness of the constraint set optimal solutions exist. We also assume strict convexity of the cost
function $\sum_{i=1}^{n} f_i$, so that the optimal solution is unique.

Suppose that $\mathbb{N}$ denotes the set of natural numbers and $k \in
\mathbb{N}$ denotes the time steps. To this end, we denote by $x_i^j(k)$ and $\overline{x}^j_i(k)$
 (refer \eqref{average_eqn}) the amount of resource allocated and average allocation at the (discrete) time step $k$, respectively. The camera can obtain any amount in $[0, C^j]$, for all $j$. We define the
average allocation for $i \in \mathcal{N}, \ j \in \mathcal{M}$, and $k \in \mathbb{N}$, as follows: 
\begin{align} \label{average_eqn}
\overline{x}^j_i(k)=\frac{1}{k+1} \sum_{\ell=0}^k x^j_i(\ell).
\end{align}
The goal is to propose a distributed iterative scheme, such that the long-term average allocations converge to the optimal allocations:
\begin{align}
\label{eq:longtermopt}
\lim_{k\to\infty} \overline{x}_i^j(k) \to x_i^{*j}, \quad \text{for $i \in \mathcal{N}$ and $j \in \mathcal{M}$}.
\end{align}

Let $\nabla_j f_i(.)$ be (partial) derivative of $f_i(.)$ with respect to resource $R^j$. 
 Similar to \cite{Syed2018}, we write the Lagrange multipliers of \eqref{obj_fn1}, with careful analysis we obtain that the derivatives of cost functions of all cameras competing for a particular resource should make a consensus at optimal allocations, i.e., the following holds true: 
\begin{align}\label{optimality}
\nabla_j f_i \big( x_i^{*1}, \ldots, x_i^{*m} \big) = \nabla_j f_u \big ( x_u^{*1}, \ldots, x_u^{*m} \big ), \nonumber\\ \mbox{ for all } i,u \in \mathcal{N} \mbox{ and } j \in \mathcal{M},
\end{align} 
 which satisfies all the Karush-Kuhn-Tucker (KKT) conditions. KKT conditions are necessary and sufficient condition for optimality of convex problem \eqref{obj_fn1}, interested readers may refer Chapter 5.5.3 \cite{Boyd2004} for a detailed discussion on KKT conditions. 
Now, to check the efficacy of our results we use the consensus of derivatives of cost function of all cameras with respect to a particular resource and show that the average allocation converge to the optimal allocation. 
\section{A primer on AIMD} \label{prelim}
The AIMD algorithm is of interest because it can be tuned to achieve
optimal distribution of a single resource among a group of agents. To this
end no inter-agent communication is necessary. The agents just receive
capacity signals from a control unit and respond to it in a stochastic
manner. This response can be tuned so that the long-term average
optimality criterion (cf. \eqref{eq:longtermopt}) can be achieved. The following is an excerpt from \cite{Corless2016}.

In AIMD algorithm each agent follows two rules of action at each time step:
either it increases its share of the resource by adding a fixed amount while total demand is
less than the available capacity, or it reduces its share in a
multiplicative manner when notified
that global capacity has been reached. In the additive increase (AI) phase of the algorithm agents probe the
available capacity by continually increasing their share of the
resource. The multiplicative decrease\index{multiplicative-decrease}
(MD) phase occurs when agents are notified that the capacity limit has
been reached; they respond by reducing their shares, thereby freeing
up the resource for further distribution. This pattern is repeated by every agent as
long as the agent is competing for the resource. The only information
given to the agents about availability of the resource is a
notification when the collective utilization of the resource achieves
some capacity constraint.  At such times, so called {\em capacity
	events}, some or all agents are instantaneously informed that
capacity has been reached.  The mathematical description of the basic continuous-time AIMD model
is as follows. 
Let $n$ agents compete for a resource, and suppose that $x_i(t) \in \mathbb{R}_+$ denotes the quantity of the collective resource obtained by agent $i$ at time $t\in \mathbb{R}_+$. Let $C$ denotes the total capacity of the resource available to the
entire system (which need not be known by the agents). The capacity constraint requires that
$\sum_{i=1}^{n} x_i(t) \le C$ for all $t$. As all agents are continuously
increasing their share this capacity constraints will be reached
eventually. We denote the times at which this happens by $t_k, k\in \mathbb{N}$.
At time $t_k$ the global utilization of the resource reaches
capacity, thus
$\sum_{i=1}^n x_i(t_k) = C.$
When capacity is achieved, some agents
decrease their share of the resource.
The instantaneous decrease of the share for agent $i$ is
defined by:
\begin{align} \label{MD} x_i(t_k^{+}) := \lim_{t \rightarrow t_k,
	\, t > t_k} x_i(t) = \beta_i x_i(t_k),
\end{align}
where $\beta_i$ is a constant satisfying $ 0\le \beta_i <1.$
In the simplest version of the algorithm, agents are assumed to
increase their shares at a constant rate in the AI phase: 
\begin{align} \label{AI} x_i(t ) = \beta_i x_i(t_k) + \alpha_i (t
- t_k), \quad t_k < t \le t_{k+1},
\end{align} 
where, $\alpha_i>0$,
is a positive constant, which may be different for different
agents, $\alpha_i$ is known as the {\em growth rate}\index{growth
	rate} for agent $i$. By writing $x_i(k)$ for the $i$th agent's
share at the $k$th capacity event as $x_i(k) := x_i(t_k)$ we have:
\begin{align*}
x_i(k+1) = \beta_i x_i(k) + \alpha_i T(k),
\end{align*}
where $
T(k) := t_{k+1}-t_k,$
is the time between events $k$ and $k+1$.  There are
situations where not all agents may respond to every capacity
event. Indeed, this is precisely the case considered in this paper. In
this case agents respond asynchronously to a congestion notification
and the AIMD model is easily extended by using our previous formalism
by changing the multiplicative factor\index{multiplicative factor} to
$\beta_i = 1$ at the capacity event if agent $i$ does not
decrease. 

\section{Multi-resource allocation} \label{divisible_mul_res}

Let $\delta_j>0$ be a fixed constant, for all $j$ and $\nabla^2 f$ be the matrix of second order partial derivatives of $f$ called Hessian of $f$. Furthermore, let $\mathcal{F}_\delta$ denotes the set of twice continuously differentiable functions defined as follows: 
\begin{equation} \label{def_F_delta}
\begin{split}
\mathcal{F}_\delta = \Big\{ f:\mathbb R^m_+ \to \mathbb R
\Big \lvert   \Big( x^j > 0 \implies   0 < \delta_j \nabla_j f(x) < x^j \\
\text{ for all $j$}\Big)  \text{ and } \nabla^2 f(x) \succeq 0 \text{ for all }x\in \mathbb{R}^m_+ \Big\}.
\end{split}
\end{equation}
Here, $\nabla^2 f(x) \succeq 0$ represents a positive semi-definite matrix. We observe that $\mathcal{F}_\delta$ is essentially the set of functions that are convex, twice continuously differentiable and increasing in each coordinate.
We consider the problem of allocating $m$ resources with capacity
$C^j$, for $j\in \mathcal{M}$ among $n$ competing Internet-connected devices, whose cost functions $f_1,\ldots,f_n$ belong to the
set $\mathcal F_\delta$. Additionally, each cost function is private and should be
kept private. However, we assume that the set $\mathcal F_\delta$ is
common knowledge --- the control unit needs the knowledge of $\delta_j$ and the
Internet-connected devices need to have cost functions from this set. We should make clear that $\mathcal F_\delta$ has a large range of allowed cost functions. By knowing this range, the control unit can not easily guess the actual cost function, thereby giving the Internet-connected device a nontrivial amount of privacy.
 In this paper, we propose a distributed algorithm that determines instantaneous allocations $\{x_i^j(k)\}$, for all $i, j$ and $k$. Recall that $x^* = (x_1^{*1}, \ldots, x_n^{*m})$ is the solution of \eqref{obj_fn1}. We also show empirically that
for every Internet-connected device $i$ and resource $R^j$, the long-term average allocations converge to the optimal allocations i.e.,
$\overline{x}_i^j(k) \to {x}_i^{*j}$
as $k\to \infty$ (cf. \eqref{eq:longtermopt}) to achieve the minimum overall cost to the society called {\em social cost}.

\subsection{Algorithm}
In the system, each Internet-connected device runs a distinct distributed AIMD algorithm. We use $\alpha^j>0$
to represent the additive increase factor or growth rate and $0 \leq \beta^j \leq 1$ to
represent multiplicative decrease factor, both corresponding to
resource $R^j$, for $j \in \mathcal{M}$.	
We represent $\Gamma^j$ as the {\em normalization factor}, chosen based on the knowledge of fixed constant $\delta_j$ to scale probabilities $\lambda_i^j(k)$.
Every algorithm is initialized with the same set of parameters 
$\Gamma^j$, $\alpha^j$, $\beta^j$ received from the control unit of the system.   
We represent the one-bit \emph{capacity event signals} by $S^j(k) \in \{0,
1\}$ at time step $k$ for resource $R^j$,
for all $j$ and $k$.
At the start of the system the control unit initializes the capacity event signals $S^j(0)$ with $0$, and updates $S^j(k)=1$ when the total allocation $\sum_{i=1}^n x_i^j(k)$ exceeds the capacity $C^j$ of a resource $R^j$ at a time step $k$. After each update, control unit broadcasts it to Internet-connected devices in the system signaling that the total demand has exceeded the capacity of the resource $R^j$. We describe the algorithm of control unit in Algorithm \ref{algoCU1}.
\begin{algorithm}  \SetAlgoLined Input:
	$C^{j}$, for $j \in \mathcal{M}$.
	
	Output:
	$S^{j}(k+1)$, for $j \in \mathcal{M}$, $k \in
	\mathbb{N}$.
	
	Initialization: $S^{j}(0) \leftarrow 0$, for $j \in \mathcal{M}$,
	
	broadcast $\Gamma^{j} \in (0,\delta_j]$ according to \eqref{gamma};
	
	\ForEach{$k \in \mathbb{N}$}{
		
		\ForEach{$j \in \mathcal{M}$}{
			\uIf{ $\sum_{i=1}^{n} {x}_i^{j}(k) > C^{j}$}{
				$S^{j}(k+1) \leftarrow 1
				$\;
				
				broadcast $S^{j}(k+1)$;	
			}
			
			\Else{$S^{j}(k+1) \leftarrow 0$\;
	} }}
	
	\caption{Algorithm of control unit}
	\label{algoCU1}
\end{algorithm}

The algorithm of each Internet-connected device works as follows.  At every time
step, each algorithm updates its demand for resource $R^j$ in one of the
following ways: an {\em additive increase (AI)} or a {\em multiplicative
	decrease (MD) phase}.
In the additive increase phase, the algorithm increases its demand
for resource $R^j$ linearly by the constant $\alpha^j$ until it
receives a capacity event signal $S^j(k) =1$ from the control unit of
the system at time step $k$ that is:
\begin{align*}
x_i^j (k+1) = 	x_i^j (k) + \alpha^j.
\end{align*}  
The multiplicative decrease phase occurs when total demand exceeds the capacity of a resource (say $R^j$), and the control unit in response broadcasts a capacity event signal $S^j(k) =1$.
In turn, each Internet-connected device $i$ responds with probability $\lambda^j_{i}(k)$ by scaling its demand by $\beta^j$. If $S^j(k) =1$, we thus have:
\begin{align*}  
x_i^j(k+1)= \left\{
\begin{array}{ll}
\beta^j x_i^j(k) & \mbox{with probability } \lambda^j_{i}(k) , \\
x_i^j(k) & \mbox{with probability } 1-\lambda^j_{i}(k).\\
\end{array}
\right.
\end{align*}	
The probability $\lambda^j_{i}(k)$
depends on the average allocation and the derivative of cost function with respect to $R^j$ of Internet-connected device $i$,
for all $i$ and $j$.  It is calculated as
follows:
\begin{align} \label{prob_x} \lambda^j_{i}(k) = \Gamma^j  
\frac{{\nabla_j} f_i \big( \overline{x}^1_i(k), \overline{x}^2_i(k), \ldots,
	\overline{x}^m_i(k) \big)}{\overline{x}^j_i(k)},
\end{align}
for all $i $, $j$ and $k$. After the reduction of demands, all Internet-connected devices can
again start to increase their demands until the next capacity event occurs. This process repeats. 
It is obviously required that always $ 0 < \lambda^j_{i}(k) < 1$. To
this end the normalization factor $\Gamma^j$ is
needed which is based on the set $\mathcal F_\delta$. The fixed constant $\delta_j >0$ is chosen such that $\Gamma^j$ satisfies the following:
\begin{align}\label{gamma_delta}
0 <\Gamma^j \leq \delta_j, \text{ for all } j.  
\end{align}
At the beginning of
the algorithm the normalization factor $\Gamma^j$ for
resource $R^j$ is calculated explicitly as the following and broadcast to all Internet-connected devices in the system:
\begin{align}\label{gamma}
\Gamma^j = 
\inf_{x_1^1,\ldots,x_n^m \in \mathbb{R}_+,  f \in
	\mathcal{F}_\delta}
\Big(\frac{x^j}{\nabla_j f(x^1,
	x^2, \ldots, x^m)} \Big), \text{ for all } j.  
\end{align}
To capture the stochastic nature of the response to the capacity signal, we define the following independent Bernoulli random variables:
\begin{align} \label{bern_var}
b^j_i(k)=
\left\{
\begin{array}{ll}
1  & \mbox{with probability } \ \lambda^j_{i}(k),\\
0 & \mbox{with probability } 1-\lambda^j_{i}(k), 
\end{array}
\right. 
\end{align}
for all $i$, $j$ and $k$. The following theorem proves that $ 0 < \lambda^j_{i}(k) < 1$.   
\begin{theorem}[Probability measure] \label{theorem1}  For a given $\delta_j >0$, if \text{ } $\overline{x}_i^j(k) >0$ and the cost function $f_i$ of Internet-connected device $i$ belongs to $\mathcal{F}_\delta$, then for all $i, j$ and $k$, $\lambda_i^j(k)$ satisfies $0 < \lambda_i^j(k) < 1$.
\end{theorem}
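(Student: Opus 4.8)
The plan is to read off both inequalities directly from the membership condition \eqref{def_F_delta} defining $\mathcal{F}_\delta$, together with the bound $0<\Gamma^j\le\delta_j$ from \eqref{gamma_delta}, and the formula \eqref{prob_x} for $\lambda_i^j(k)$. Fix $i\in\mathcal N$, $j\in\mathcal M$, and $k\in\mathbb N$, and abbreviate the relevant quantities by setting $g:=\nabla_j f_i\big(\overline{x}_i^1(k),\ldots,\overline{x}_i^m(k)\big)$ and $a:=\overline{x}_i^j(k)$, so that $\lambda_i^j(k)=\Gamma^j g/a$. By hypothesis $a>0$, so applying the defining property of $\mathcal F_\delta$ at the point $\big(\overline{x}_i^1(k),\ldots,\overline{x}_i^m(k)\big)\in\mathbb R^m_+$ in coordinate $j$ yields the double inequality $0<\delta_j g<a$.

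For the lower bound, the left part gives $g>0$; combined with $a>0$ and $\Gamma^j>0$ (from \eqref{gamma_delta}), the expression $\Gamma^j g/a$ is a ratio of strictly positive numbers, so $\lambda_i^j(k)>0$. For the upper bound, divide the right part $\delta_j g<a$ by $\delta_j a>0$ to get $g/a<1/\delta_j$; multiplying by $\Gamma^j>0$ and then using $\Gamma^j\le\delta_j$ gives $\lambda_i^j(k)=\Gamma^j g/a<\Gamma^j/\delta_j\le 1$. Since $i,j,k$ were arbitrary, this establishes $0<\lambda_i^j(k)<1$ in every case covered by the hypotheses.

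The argument is essentially bookkeeping once these facts are lined up, so I do not expect a genuine analytic obstacle. The one place that deserves care — and the only point where the construction \eqref{gamma} of $\Gamma^j$ is really used — is the claim that $\Gamma^j$ is well defined and lies in $(0,\delta_j]$: one must argue the infimum in \eqref{gamma} is finite and strictly positive (so that the probability scaling is nondegenerate) and bounded above by $\delta_j$, which is exactly what \eqref{gamma_delta} records. An alternative, more self-contained route to the upper bound would be to invoke the infimum characterization directly, using $\Gamma^j\le a/g$ for the admissible pair $\big((\overline{x}_i^1(k),\ldots,\overline{x}_i^m(k)),f_i\big)$, but the route above via \eqref{gamma_delta} is shorter and keeps the dependence on $\delta_j$ explicit.
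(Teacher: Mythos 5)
Your proposal is correct and follows essentially the same route as the paper's proof: apply the defining inequality $0 < \delta_j \nabla_j f_i(\cdot) < \overline{x}_i^j(k)$ from \eqref{def_F_delta} at the current average allocation, divide by $\overline{x}_i^j(k) > 0$, and use $0 < \Gamma^j \le \delta_j$ from \eqref{gamma_delta} to conclude $0 < \lambda_i^j(k) < 1$. Your version merely makes explicit the step the paper glosses as ``substituting $\Gamma^j$,'' namely that $\Gamma^j g/a \le (\delta_j g)/( \delta_j a)\cdot \Gamma^j < \Gamma^j/\delta_j \le 1$, which is a welcome clarification but not a different argument.
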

\begin{proof}
	It is given that $f_i \in \mathcal{F}_\delta$ and $\overline{x}_i^j(k) > 0$ for all $i$, $j$ and $k$ then from \eqref{def_F_delta}, we write as follows:
	\begin{align} \label{eq117}
	0 < \delta_j \nabla_j f_i \big( \overline{x}_i^1(k),\overline{x}_i^2(k), \ldots, \overline{x}_i^m(k) \big) < \overline{x}_i^j(k).
	\end{align}
	We know that for a fixed constant $\delta_j>0$, the normalization factor $\Gamma^j$ satisfies $0 < \Gamma^j \leq \delta_j$,  for all $j$  (cf. \eqref{gamma_delta}). It is given that $\overline{x}_i^j(k) >0$, dividing \eqref{eq117} by $\overline{x}_i^j(k)$ and substituting $\Gamma^j$ we obtain as follows:
	\begin{align} \label{eq114}
	&0 <   \frac{\Gamma^j \nabla_j f_i \big( \overline{x}_i^1(k),\overline{x}_i^2(k), \ldots, \overline{x}_i^m(k) \big)}{\overline{x}_i^j(k)} < 1, \\&\text{ for all $i,j$ and $k$} \nonumber .
	\end{align}
	Since, for all $i, j$ and $k$, an Internet-connected device $i$ makes a decision to respond the capacity event of a resource $R^j$ with $\lambda_i^j(k)$ (cf.  \eqref{prob_x}).
	Hence, after placing $\lambda_i^j(k)$ in \eqref{eq114}, we obtain
	$0 <  \lambda_i^j(k) < 1$, for all $i,j$ and $k$. 
\end{proof}	
Notice that because of the stochastic nature of the algorithm, an Internet-connected device may reduce its resource demand and fails to complete its current job, but only in cases where other Internet-connected devices derive more benefit than this Internet-connected device. This is done in order to maximize the overall benefit to the society called {\em social welfare}.
\begin{figure}[H]
	\centering
	\includegraphics[width=0.55\textwidth,clip=true,trim=7.5cm 9.05cm 1.25cm 4.2cm]{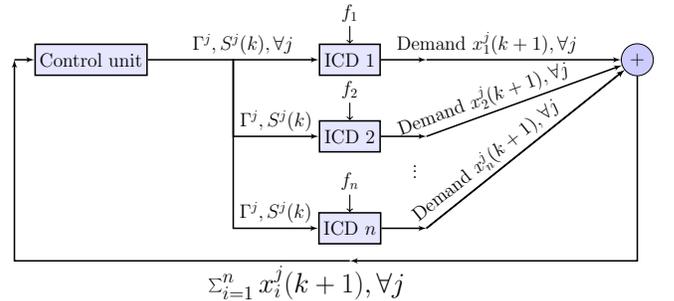}
	\caption{Block diagram of the proposed AIMD model, here the central server works also as the control unit, and ICD represents the algorithm of Internet-connected device.}
	\label{Diag_AIMD}
\end{figure}
We present the block diagram of the system in Figure~\ref{Diag_AIMD} and the proposed distributed multi-resource allocation algorithm for each Internet-connected device in Algorithm \ref{algo1}.
\begin{algorithm} \SetAlgoLined Input:
	$S^{j}(k)$, for $j \in \mathcal{M}, k \in
	\mathbb{N}$
	and $\Gamma^j$,
	$\alpha^j, \beta^j$, for $j \in \mathcal{M}$.
	
	Output:
	$x^j_i(k+1)$, for $j \in \mathcal{M}$, $k \in
	\mathbb{N}$.
	
	Initialization: $x^j_i(0) \leftarrow 0$ and
	$\overline{x}^j_i(0) \leftarrow x^j_i(0)$, for
	$j \in \mathcal{M}$;

	\While{Internet-connected device $i$ is active at $k \in \mathbb{N}$}{
		
		\ForEach{$j \in \mathcal{M}$}{

			\uIf{$S^{j}(k) = 1$}{
				$ \lambda^j_{i}(k) \leftarrow \Gamma^j
				\frac{{\nabla_j} f_i \left ( \overline{x}^1_i(k),
					\overline{x}^2_i(k), \ldots, \overline{x}^m_i(k)
					\right)}{\overline{x}^j_i(k)}$;
				
				generate independent Bernoulli random variable
				$b^j_i(k)$ with the parameter
				$\lambda^j_{i}(k)$;
				
				\uIf{ $b^j_i(k)=1$}{
					$x^j_i(k+1) \leftarrow \beta^j x^j_i(k)
					$;}
				
				\Else{ $x^j_i(k+1) \leftarrow x^j_i(k) $; }
				
			} \Else{
				$x^j_i(k+1) \leftarrow x^j_i(k) +
				\alpha^j$; }
			
			$\overline{x}^j_i(k+1) \leftarrow \frac{k+1}{k+2}
			\overline{x}^j_i(k) + \frac{1}{k+2} x^j_i(k+1);$
	} }
	\caption{Algorithm of Internet-connected device $i$ (AIMD $i$) }
	\label{algo1}
\end{algorithm}
We observe using numerical results in Section \ref{results} that the average
allocation $\overline{x}_i^j(k)$ converge to the optimal
allocation $x_i^{*j}$ of resource $R^j$ over time, for all $i$ and $j$.
\begin{remark}[Communication overhead] Suppose that there are $m$ resources in
	the system, then communication overhead will be
	$\sum_{j=1}^{m} S^{j}(k)$ bits at $k^{th}$ time step, for all
	$k$. In the worst case scenario this will be $m$ bits per
	time unit, which is quite low. Furthermore, the communication complexity
	does not depend on the number of Internet-connected devices in the system.
\end{remark}
%
\section{Numerical results} \label{results}
 In this section, we use the multi-camera coordination system described in Section \ref{prob_form}. We illustrate here that the proposed distributed multi-resource allocation algorithm provides optimal allocations to all cameras in long-term average allocations and the city achieves a minimum social cost, these optimal values are same as if the problem is solved in a centralized way. 

Now, suppose that there are $60$ cameras in the multi-camera coordination system, each camera has different resolution, frame size and frame generation rate (frames per second), therefore every camera generates different amount of data. For example, a camera with frame size of $30$ KB and frame rate $10$ frames per second, produces $300$ KB video data in one second, hence $1.08$ GB in an hour.  
%
%
 Let us assume that the videos from all the cameras are stored on a server or Cloud. To transmit, process and store the videos on the server or Cloud they require network bandwidth, CPU cycles, memory (RAM), and disk storage. We assume that each camera gets enough CPU cycles to process the data but the server has limited memory (RAM), disk storage and network bandwidth. Let, $R^1$ denotes the memory (RAM), $R^2$ denotes the disk storage and $R^3$ denotes the network bandwidth. We chose  capacities of memory, disk storage and network bandwidth as $C^1 = 32$ GB, $C^2 = 200$ GB and $C^3 = 250$ Mbps, respectively. Let $10$ GB is denoted by $\mathrm{GB}^D$ and $10$ Mbps is denoted by $\mathrm{Mbps}^D$, then we write $C^2 = 20$ $\mathrm{GB}^D$ and $C^3 = 25$ $\mathrm{Mbps}^D$, we do so for the sake of uniformity of cost of resources in the cost function. Let $f_i$ be the cost function of camera $i$, each cost function depends on the average allocation of the resources. Our aim is to minimize the total cost incurred in transmitting, storing and processing the video data. For illustrative purpose we use the pricing model of Google compute engine for custom machine types \cite{Google2018} as shown in Table \ref{price_tab}. We create a dynamic pricing scheme for our simulation, keeping the values of Table \ref{price_tab} into consideration. Notice that in Table \ref{price_tab}, for the disk storage we use the price of image storage for $10$ days and the listed prices are for Iowa state. Furthermore, we use the price of bandwidth for North America as listed in \cite{Prince2018}. 
%
%
\begin{table}[ht]
\caption{Pricing scheme of Google compute engine custom machines for $4$ hours}
\centering 
\begin{tabular}{c c c c}
\hline
\hline                        
Resource type &  Price per unit (USD)  \\[0.5ex] \hline
vCPU & 0.132696  \\
RAM (GB) & 0.017784  \\
Disk storage ($10$ GB) & 0.283333   \\
Network bandwidth (10 Mbps) \cite{Prince2018} & 0.277775    \\[1ex]
\hline
\end{tabular}
\label{price_tab}
\end{table}

 Now, let $a_i, b_i,$ and $c_i$ represent the price for RAM, disk storage and bandwidth, and $d_i$ represents any other costs incurred. For all $i$, let $a_i, b_i,$ $c_i$ and $d_i$ are modeled as uniformly distributed random variables. In the simulation, we use $a_i \in \{10, 11, \ldots, 20\}$, $b_i \in \{25, 26, \ldots, 35\}$, $c_i \in \{22, 23, \ldots, 32\}$ and $d_i \in \{1, 2, \ldots, 5\}$. We use these random variables to generate random costs of each camera at different time steps, as described in \eqref{cost_fn}. To take vCPU price into consideration, we add a fraction of its price in the price of memory.
In the simulation, we chose the following additive increase factors $\alpha^1=25$ MB, $\alpha^2 = 20$ MB and $\alpha^3 = 225$ Kbps. Additionally, we chose the following multiplicative decrease factors $\beta^1=0.70$, $\beta^2 = 0.85$ and $\beta^3 = 0.75$, for the respective resources. Furthermore, we use the normalization factors $\Gamma^1 = \Gamma^2 = \Gamma^3 = 1/90$. Notice that allocation $x_i^1$ is in GB, $x_i^2$ is in $\mathrm{GB}^D$ and $x_i^3$ is in $\mathrm{Mbps}^D$.

	\begin{strip}
	\begin{align} \label{cost_fn}
	f_{i}(x_i^1, x_i^2, x_i^3) =
	\begin{cases}
	 a_i(x_i^1)^2  + c_i(x_i^3)^2 + \frac{1}{2}a_i(x_i^1)^4 + 2b_i(x_i^2)^4 + \frac{1}{2}b_i(x_i^2)^6 + \frac{1}{4} c_i(x_i^3)^4 + \frac{1}{8}d_i(x_i^3)^8  & \text{ w.p. } 1/3 \\
	  a_i(x_i^1)^2 + b_i(x_i^2)^2 + \frac{1}{2}b_i(x_i^2)^4 + \frac{3}{2} c_i(x_i^3)^4 & \text{ w.p. } 1/3 \\
	 b_i(x_i^2)^2 + c_i(x_i^3)^2 +  \frac{1}{3} a_i(x_i^1)^6  + \frac{1}{6}d_i(x_i^2)^6 + \frac{1}{8}d_i(x_i^3)^4 & \text{ w.p. } 1/3. 
	\end{cases}
	\end{align}
\end{strip}
 			
 Here, for the illustrative purpose we use only few cost functions but the proposed algorithm works on a set of cost functions with condition that these are convex, twice differentiable and increasing functions.	
\begin{figure*}[h] 
	\centering
	\subfloat[]{%
		\includegraphics[width=0.33\linewidth]{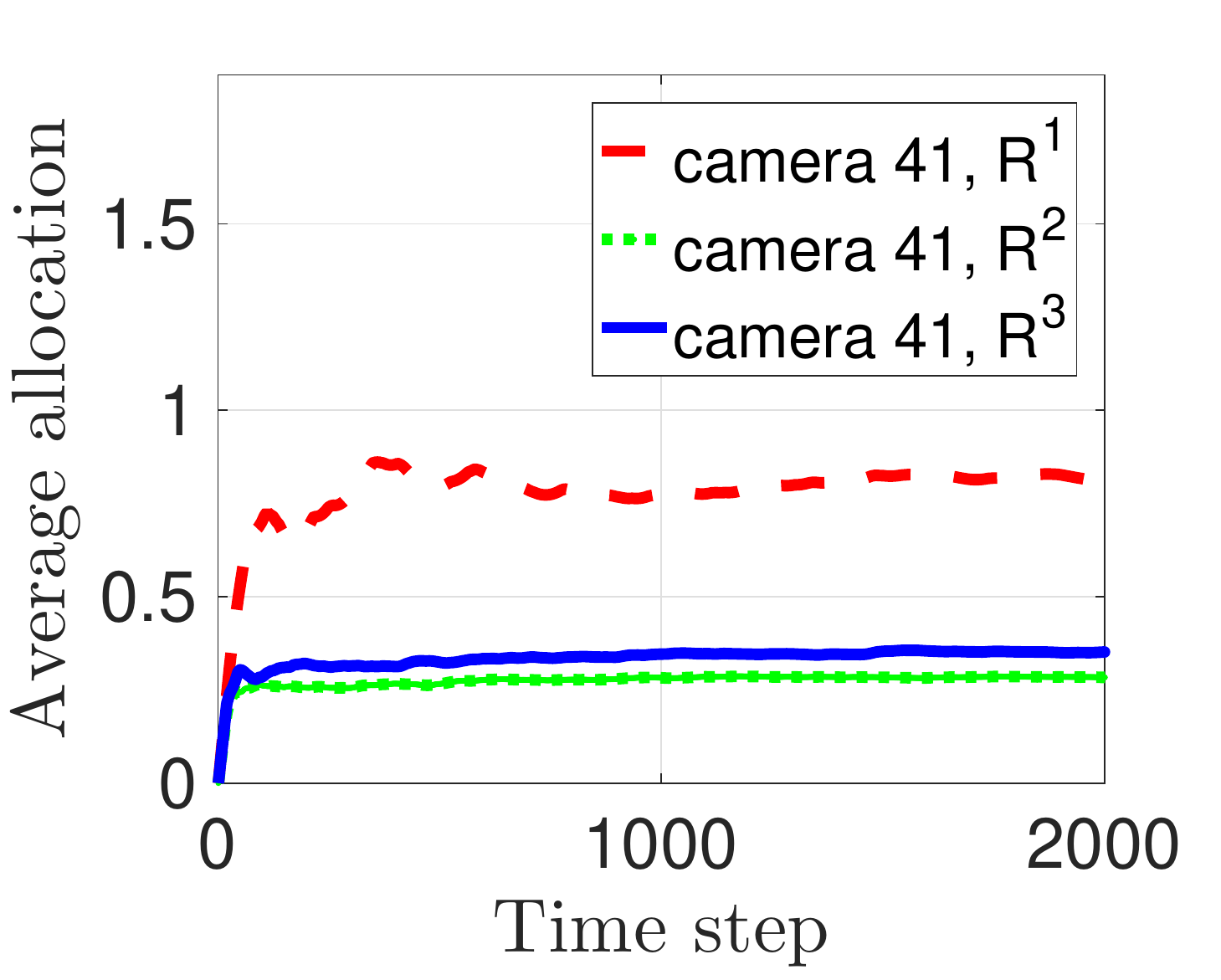}}
	\label{avg_aimd}\hfill
	\subfloat[]{%
		\includegraphics[width=0.33\linewidth]{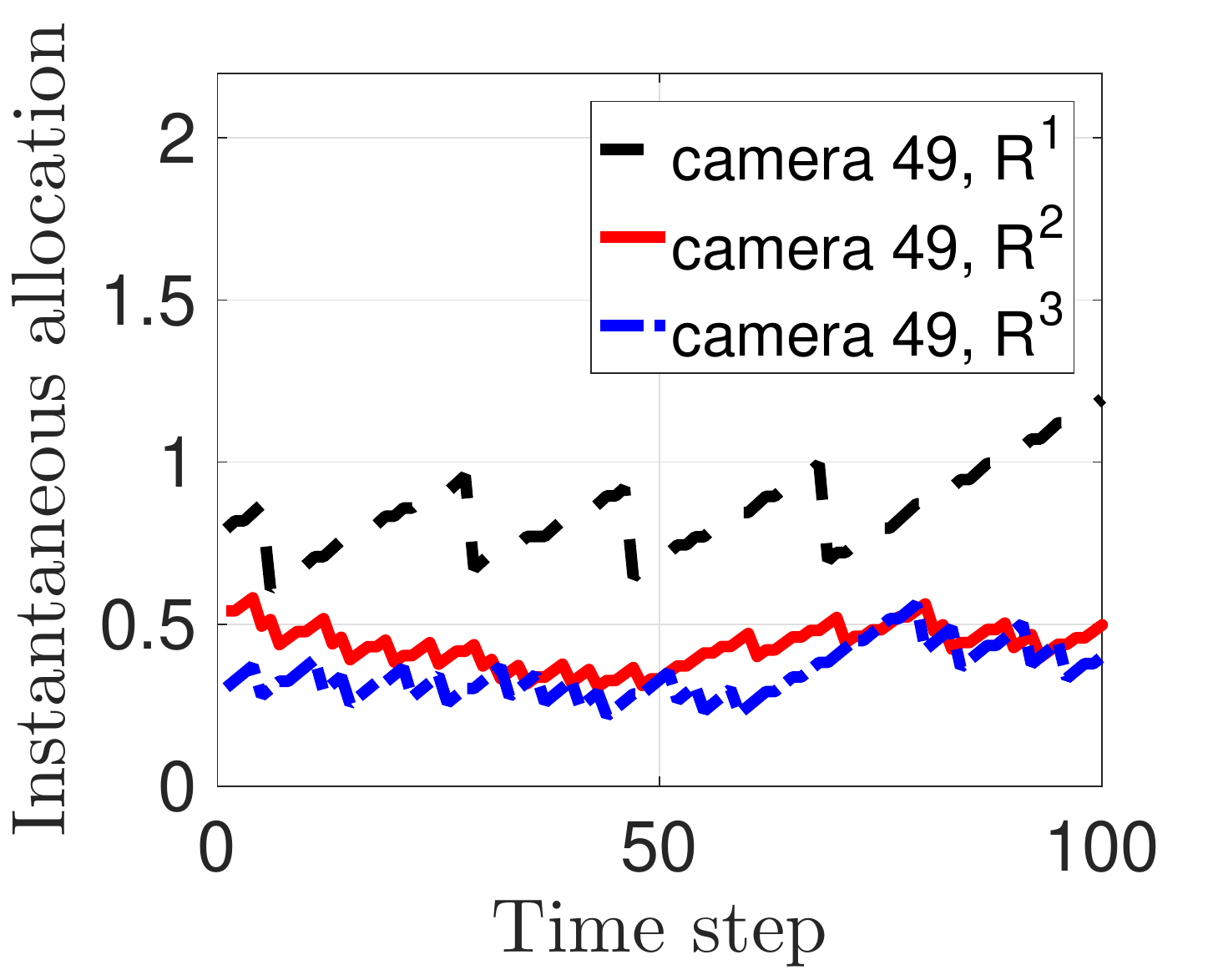}}
	\label{inst_alloc_aimd}\hfill
	\subfloat[]{%
		\includegraphics[width=0.33\linewidth]{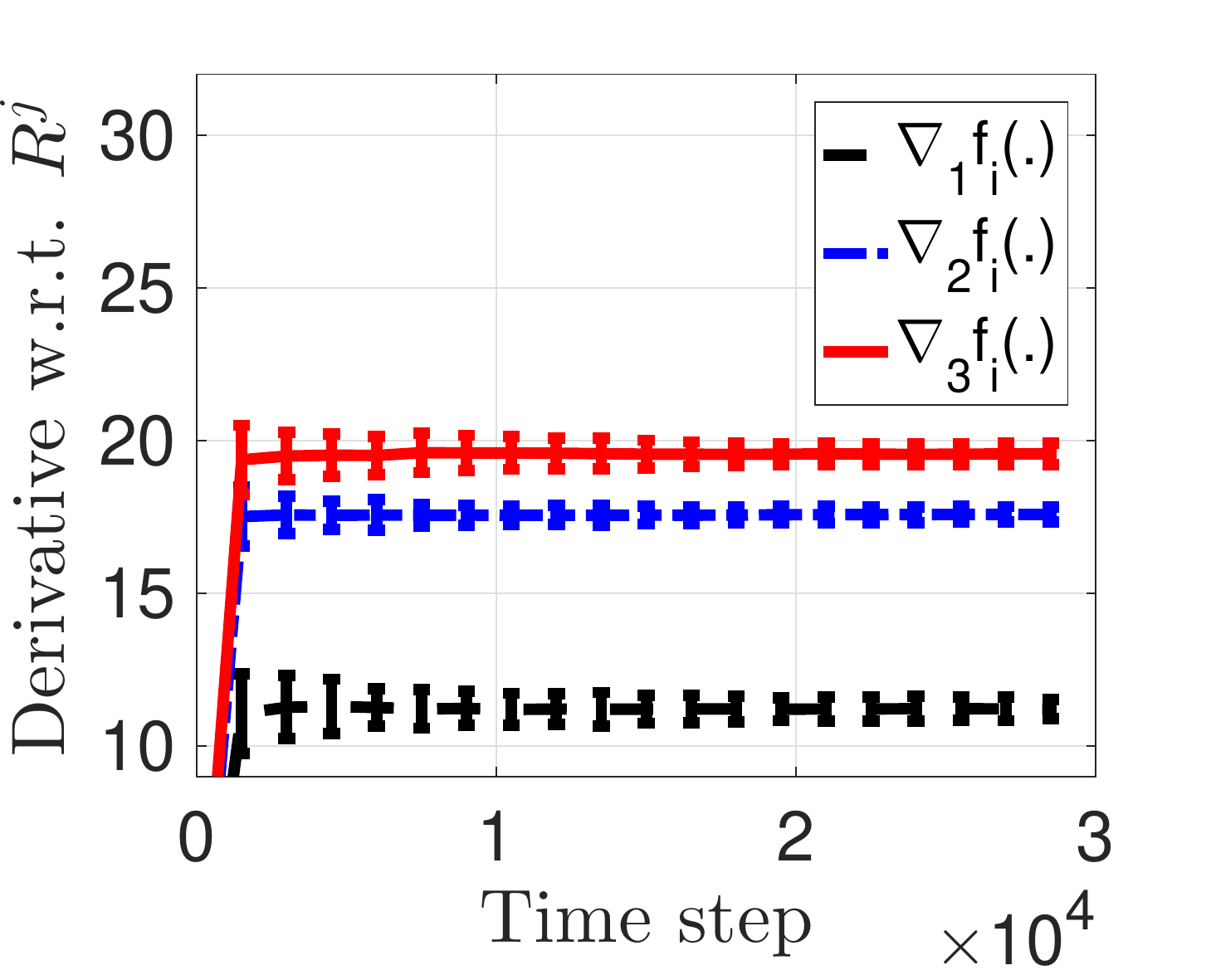}}
	\label{err_grad_aimd}\hfill
	
	\caption{(a) Evolution of average allocation of resources, (b) instantaneous allocation of resources for last $100$ time steps, (c) evolution of profile of derivatives of $f_i$ of all cameras}
	\label{fig1} 
\end{figure*}
	\begin{figure*}[h] 
	\centering
	\subfloat[]{%
		\includegraphics[width=0.345\linewidth]{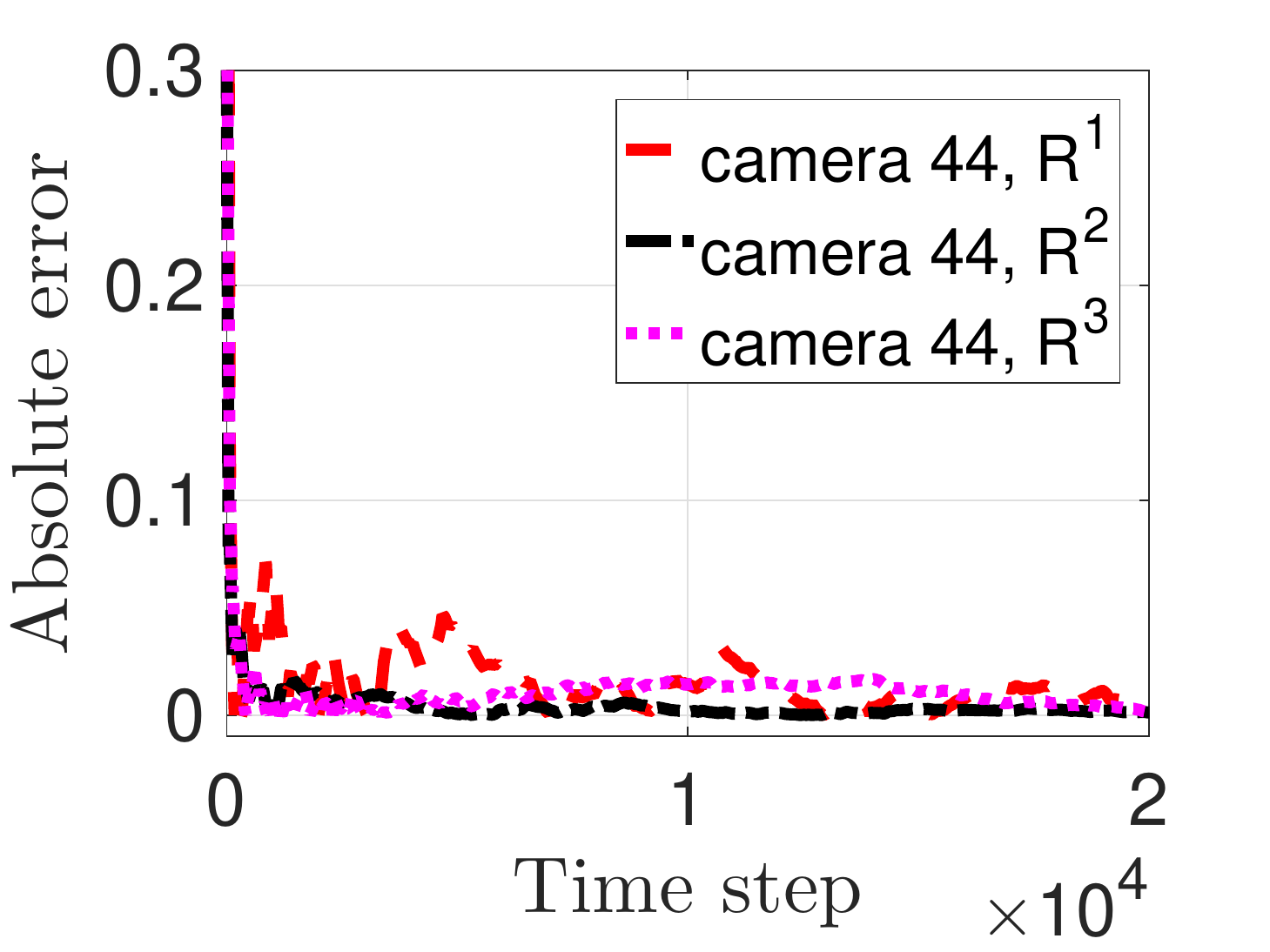}}
	\label{err_grad_BAIMD_2var_x1}\hfill
	\subfloat[]{%
		\includegraphics[width=0.345\linewidth]{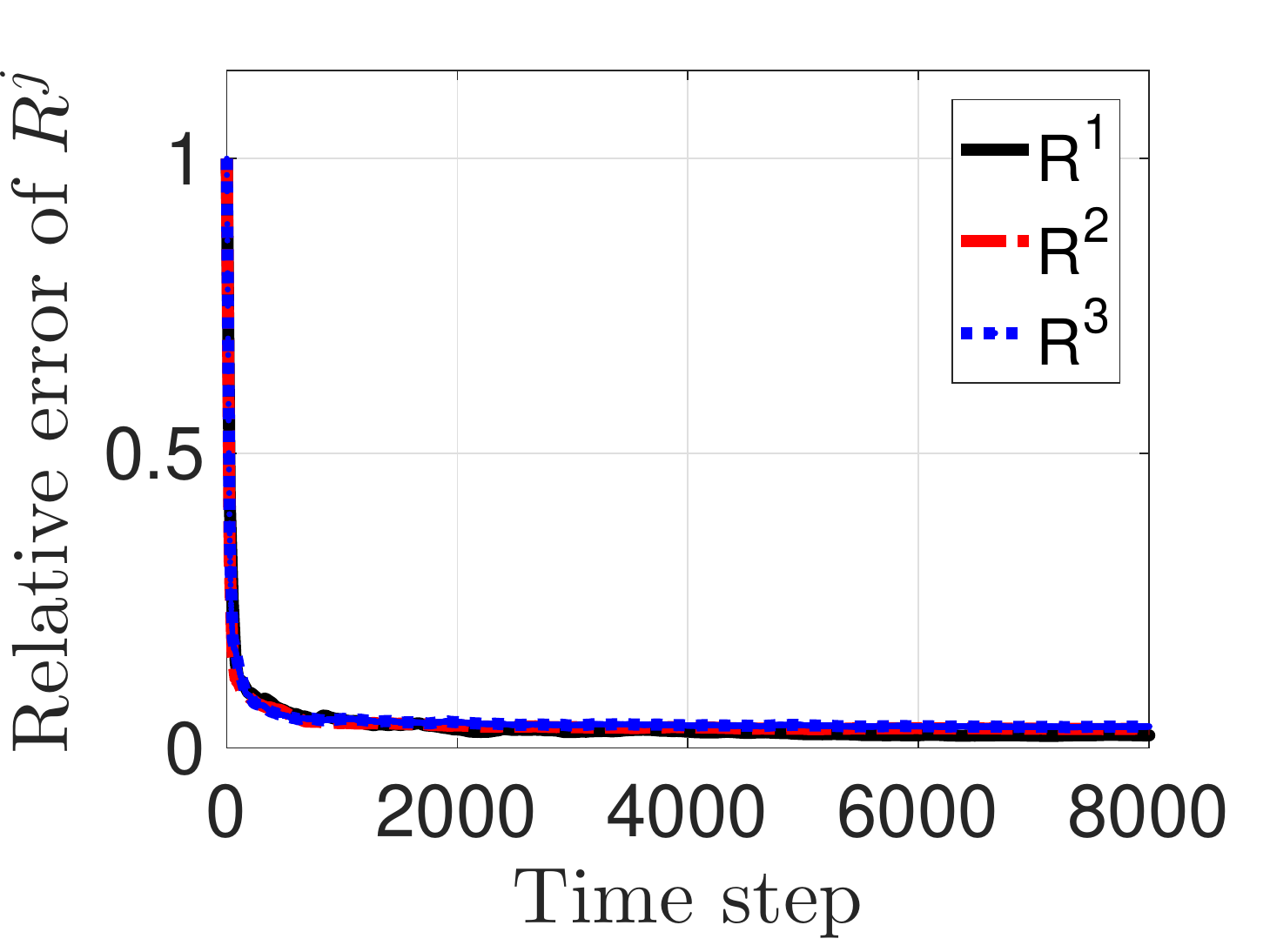}}
	\label{err_grad_BAIMD_2var_x2}\hfill
	\subfloat[]{%
		\includegraphics[width=0.308\linewidth]{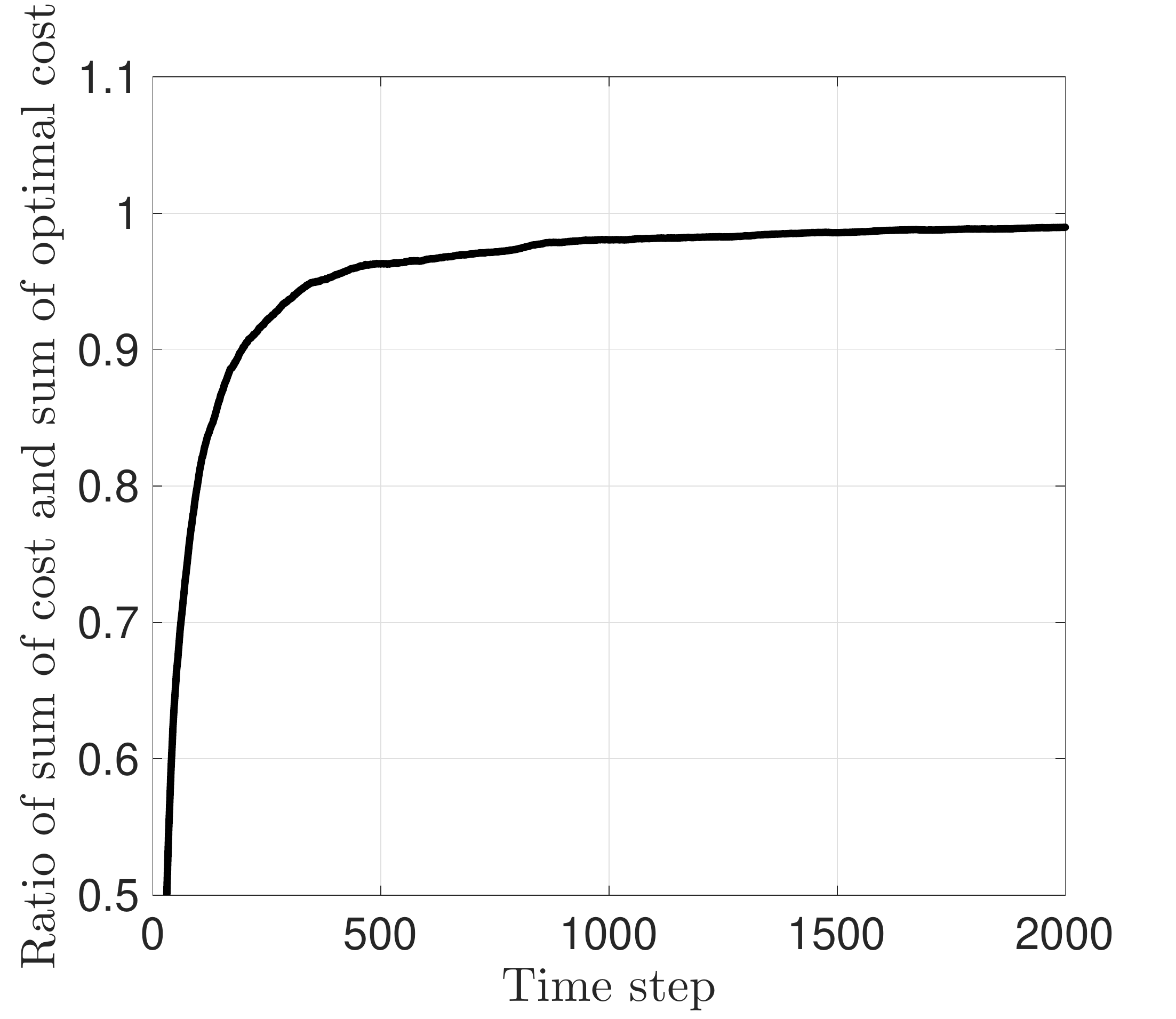}}
	\label{avg_BAIMD_2var}\hfill
	
	\caption{(a) Evolution of absolute difference between average allocation and the optimal allocation (calculated), (b) evolution of relative error of average allocation and the optimal allocation, (c) evolution of ratio of total cost and total optimal cost of all cameras}
	\label{fig2} 
\end{figure*}

The following are some of the
results obtained from the simulation. We observe in Figure \ref{fig1}(a) that the average allocations $\overline{x}_i^j(k)$ converge over time to its respective optimal value $x^{*j}_i$, for all $i$ and $j$. 
Figure \ref{fig1}(b) shows the instantaneous allocation $x_i^j(k)$ of all resources over last $100$ time steps,  which demonstrates the allocation phases (AI and MD). 
\begin{figure*}[h] 
	\centering
	\subfloat[]{%
		\includegraphics[width=0.33\linewidth]{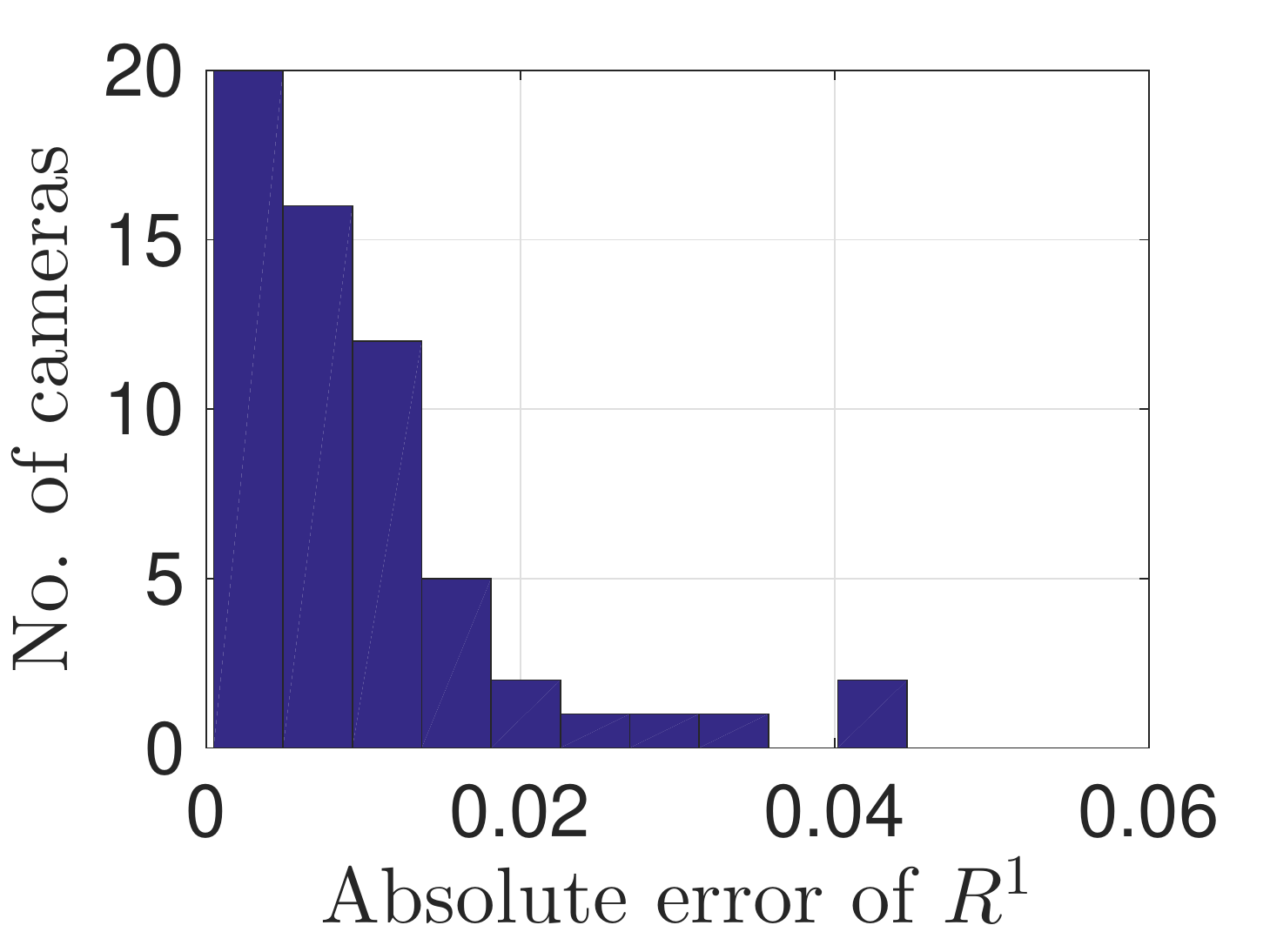}}
	\label{1a}\hfill
	\subfloat[]{%
		\includegraphics[width=0.33\linewidth]{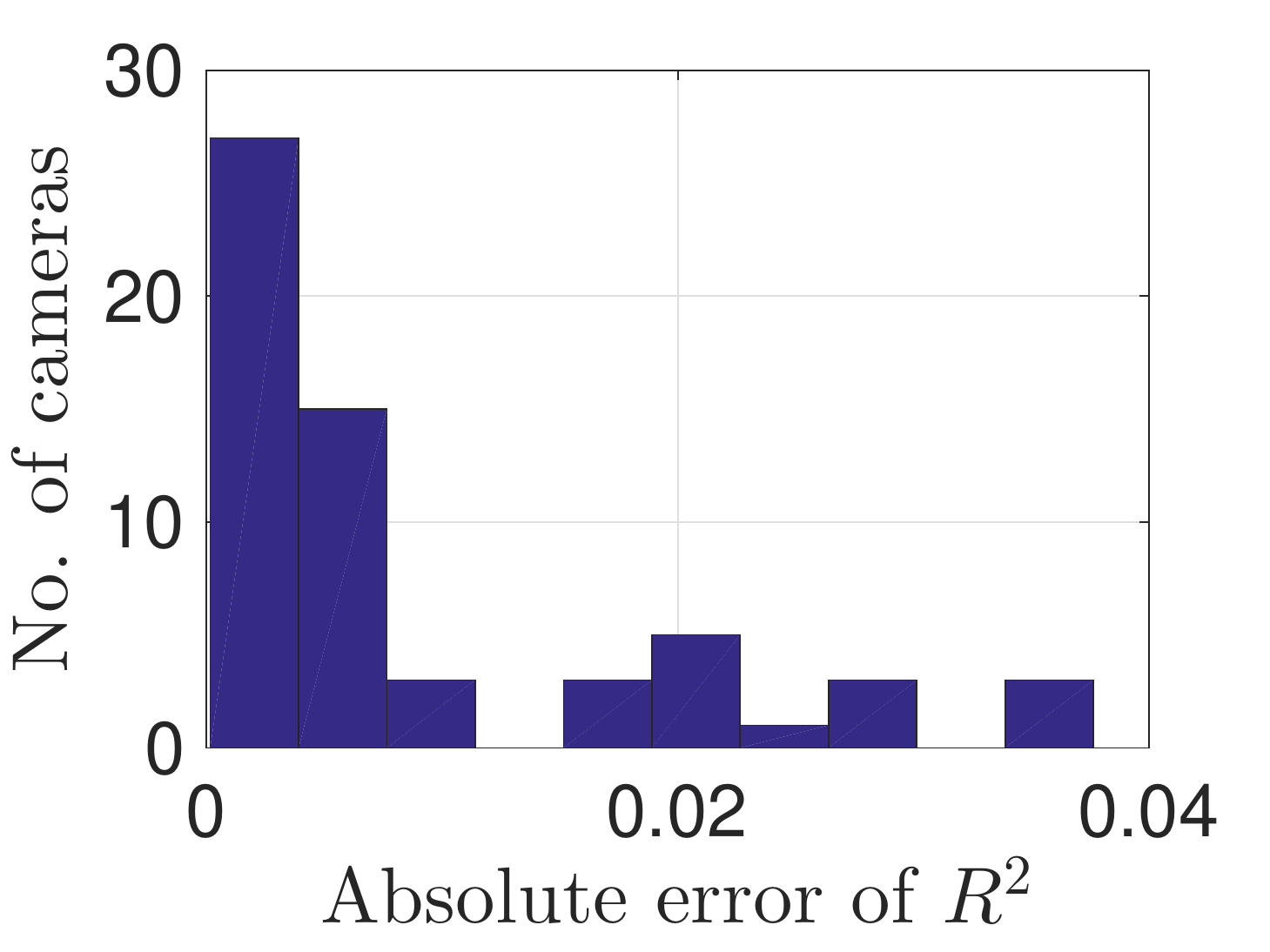}}
	\label{1b}\hfill
	\subfloat[]{%
		\includegraphics[width=0.33\linewidth]{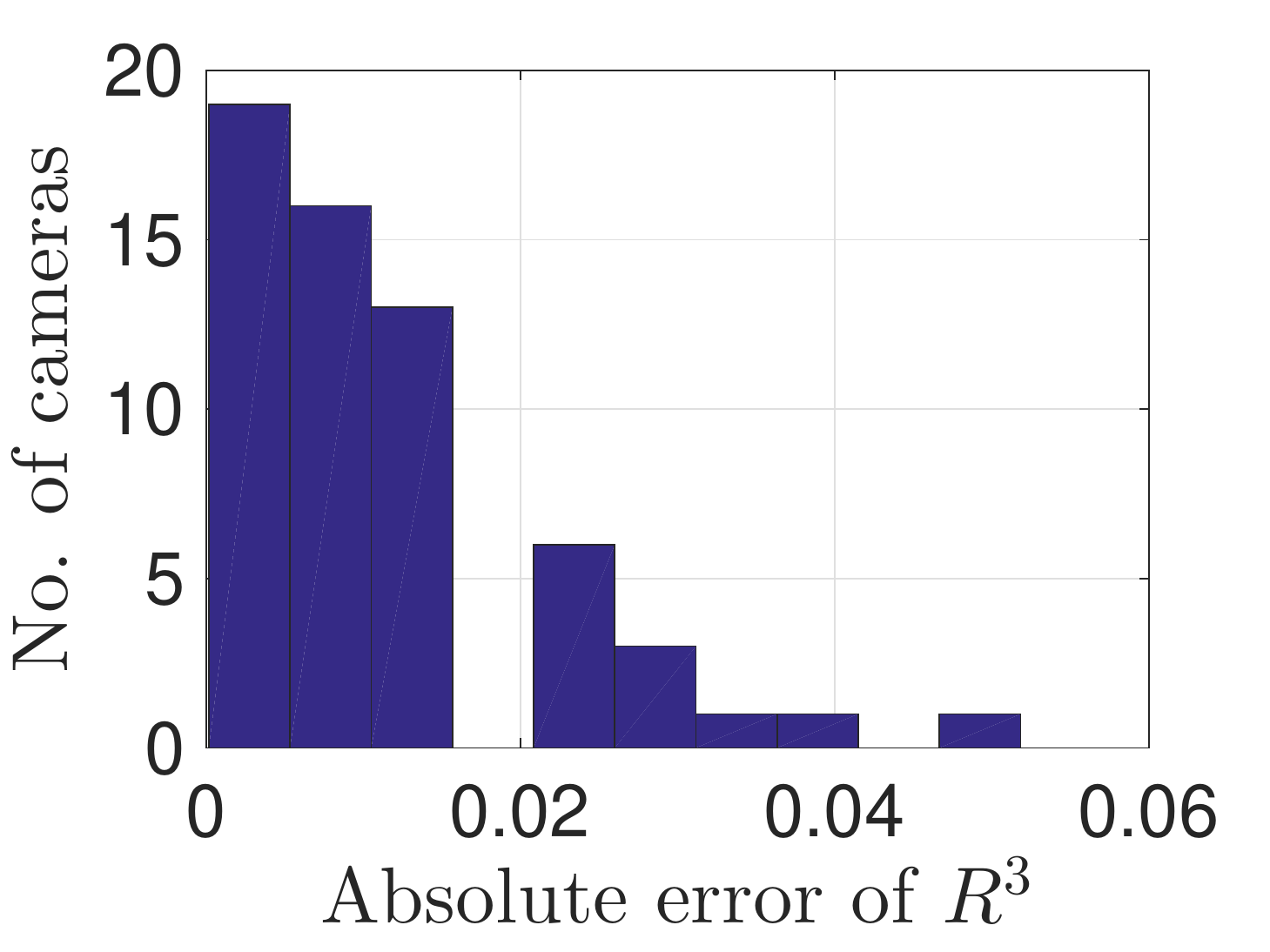}}
	\label{1c}\hfill
	\caption{Histogram of absolute error $|\overline{x}_i^j(K)-x_i^{*j}|$ of all cameras at time step $K=30000$, (a) absolute error of $R^1$, (b) absolute error of $R^2$, (c) absolute error of $R^3$}
	\label{fig3} 
\end{figure*}
\begin{figure*}[h] 
	\centering
	\subfloat[]{\includegraphics[width=0.33\linewidth]{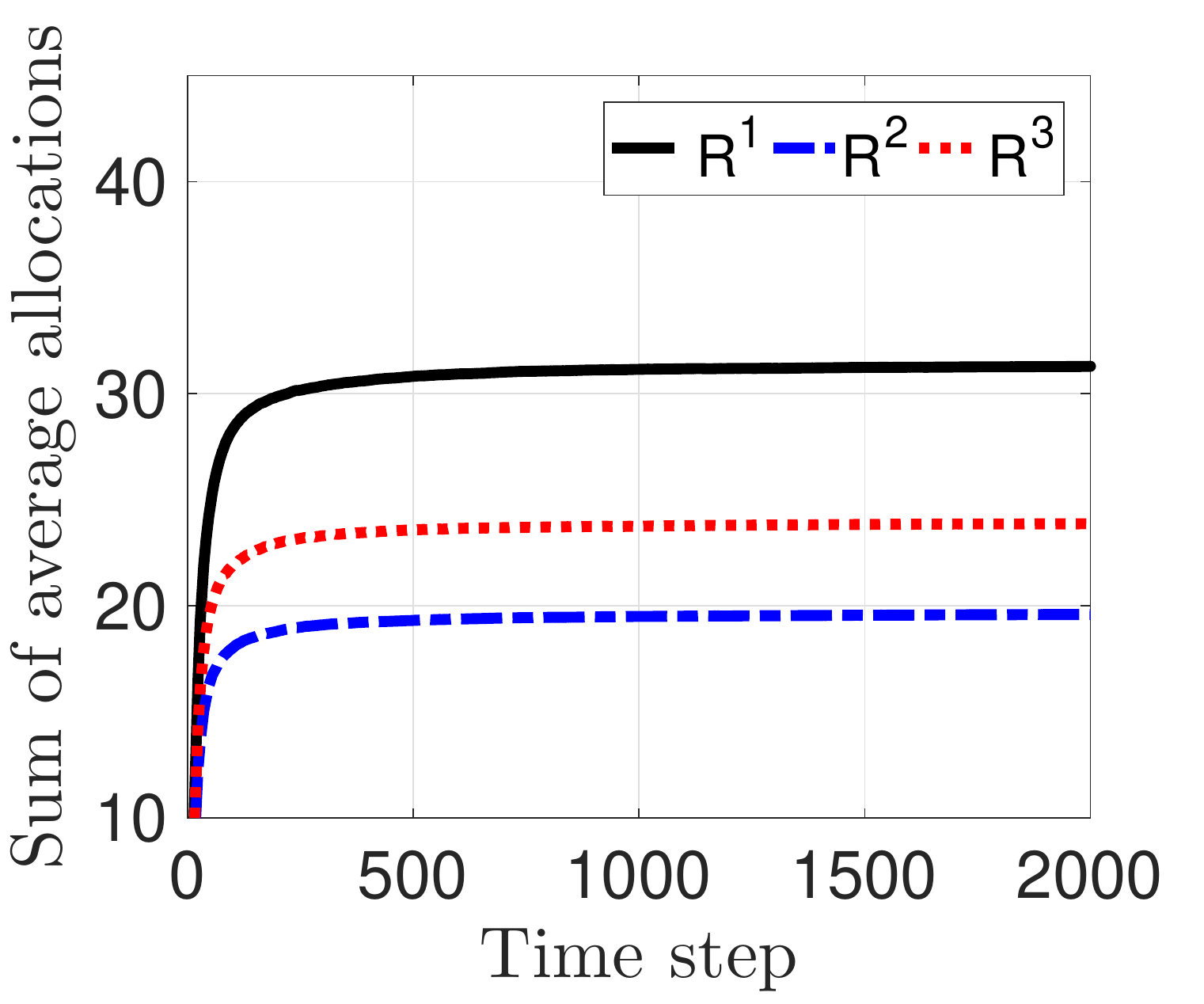}}
	\label{sum_avg}\hfill
	\subfloat[]{%
		\includegraphics[width=0.33\linewidth]{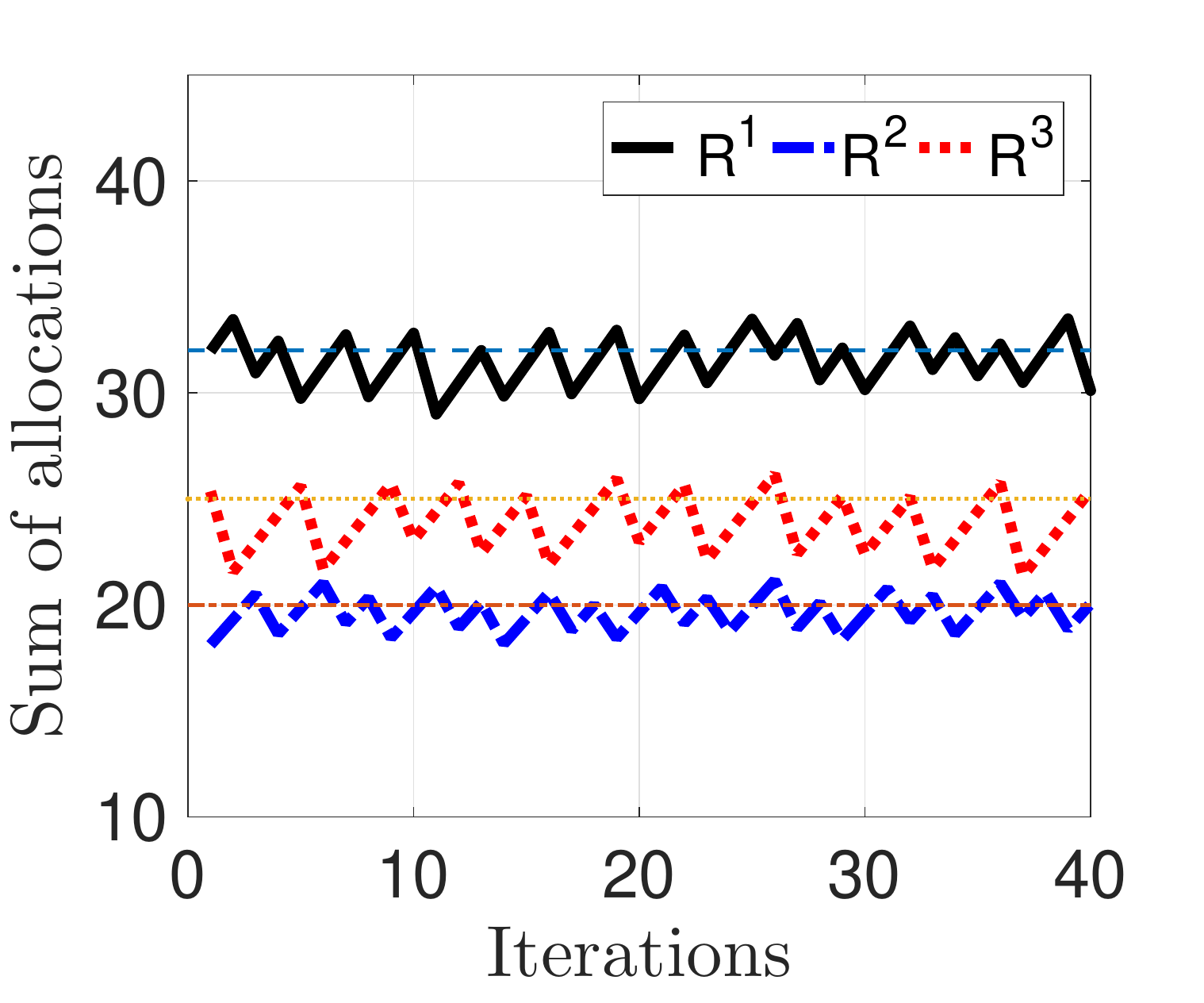}}
	\subfloat[]{%
		\includegraphics[width=0.33\linewidth]{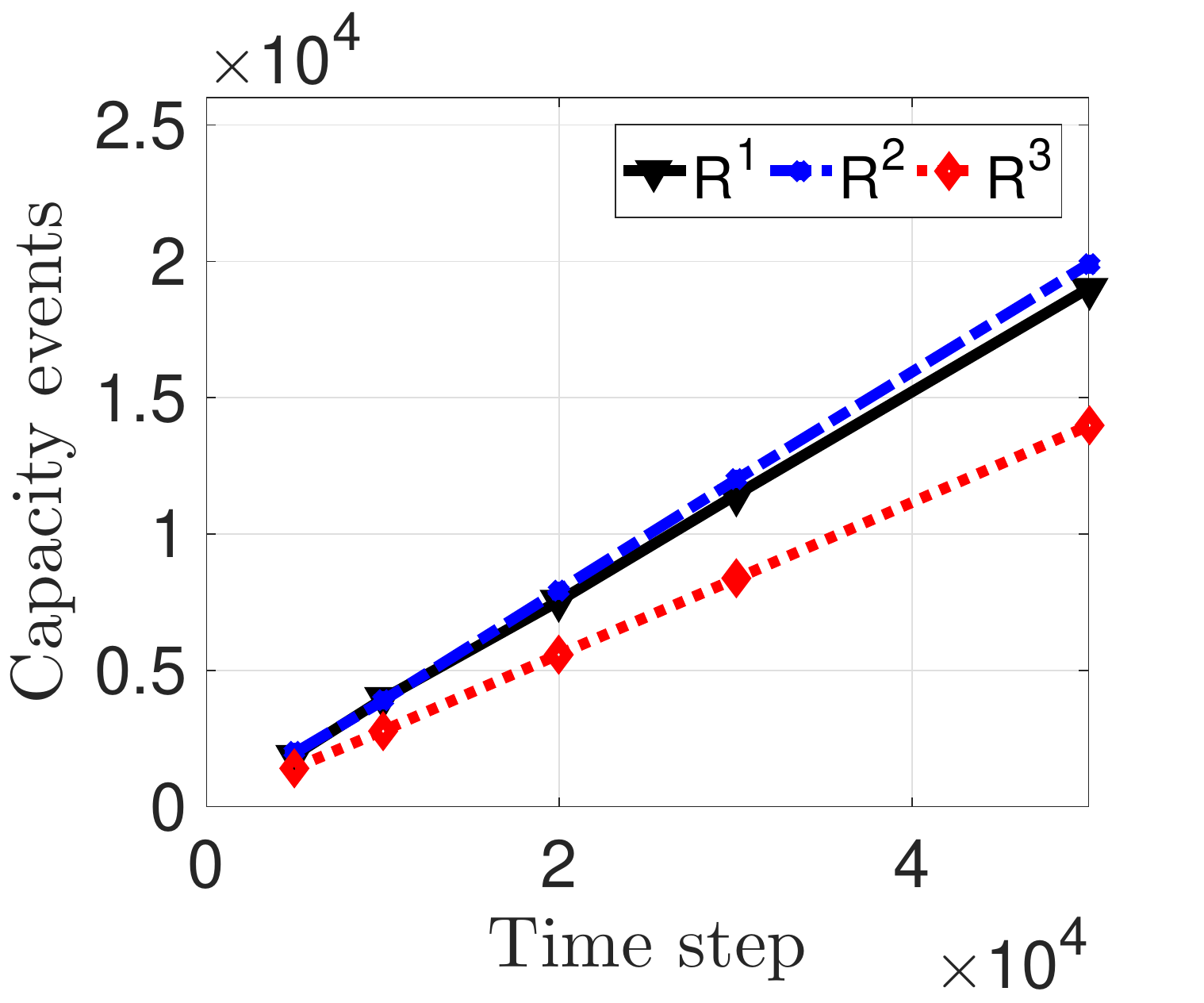}}
	\label{sum_alloc}\hfill
	\label{instant_alloc}\hfill
	\caption{(a) Evolution of sum of average allocations of resources, the sum of average allocations of a resource converges to its capacity, (b) sum of allocations of a resource is concentrated around its capacity, capacities are $C^1=32$ GB, $C^2=20$ $\mathrm{GB}^D$ and $C^3=25$ $\mathrm{Mbps}^D$, here $1 \mathrm{GB}^D = 10$ GB  and $1 \mathrm{Mbps}^D = 10$ Mbps, (c) number of capacity events for several simulations}
	\label{fig4} 
\end{figure*}

We know that, to achieve optimality, the derivatives of the cost functions of all participating cameras for a particular resource should make a consensus, which satisfies all the Karush-Kuhn-Tucker (KKT) conditions that are necessary and sufficient conditions for optimality of \eqref{obj_fn1}, as described in Section \ref{prob_form}. Figure \ref{fig1}(c) is the error bar of derivatives $\nabla_jf_i$ of cost functions $f_i$ for single simulation calculated across all cameras, for all $j$. It illustrates that the derivatives of cost functions of all cameras with respect to a particular resource concentrate more and more over time around the same value.  Hence, the long-term average allocation of resources for the stated optimization problem is optimal.

%
%
%

 For comparison purpose, we solve the optimization problem \eqref{obj_fn1} in a centralized way using the interior-point method and denote the optimal values obtained by $x_i^{*j}$, for all $i$ and $j$. We compare these optimal values with average allocation values at largest time steps in the simulation (long-term average) obtained by our proposed algorithm, we find that the results are approximately equal. Let $K$ be the largest time step used in the simulation, Figure \ref{fig2}(a) shows the evolution of {\em absolute error} which is the absolute difference of average allocation $\overline{x}_i^j(k)$ at time step $k$ and the calculated optimal allocation $x_i^{*j}$, i.e., $|\overline{x}_i^{j}(k) - x_i^{*j}|$. We observe that the absolute error approaches close to zero over time. Additionally, we calculate the {\em relative error} which we define as the ratio of sum of absolute errors and the sum of calculated optimal allocations i.e., $\frac{\sum_{i=1}^{n} |x_i^j(k) - x_i^{*j}|}{\sum_{i=1}^{n} x_i^{*j}}$. The evolution of relative error is presented in Figure \ref{fig2}(b), which decreases with time and is very low, for the described simulation it is below $5\%$. Figure \ref{fig2}(c) illustrates that the ratio of the sum of cost functions with average allocations and the sum of cost functions with optimal allocations i.e., the ratio of $\sum_{i=1}^{n} f_i(\overline{x}_i^1(K), \overline{x}_i^2(K), \overline{x}_i^3(K))$ and $\sum_{i=1}^{n} f_i(x_i^{*1}, x_i^{*2}, x_i^{*3})$ is close to $1$, which further strengthens our claim.
Furthermore, to gather information about absolute errors $|\overline{x}_i^{j}(K) - x_i^{*j}|$ of all cameras at time step $K$, we present their histograms in Figure \ref{fig3}, we observe that the absolute error of most of the cameras are close to zero.

Figure \ref{fig4}(a) illustrates the sum of average allocations $\sum_{i=1}^{n} \overline{x}_i^j(k)$ over time. We observe that the sum of average allocations at largest time step $K$ is approximately equal to the respective capacity i.e., $\sum_{i=1}^{n} \overline{x}_i^j(K) \approxeq C^j$, for all $j$ (capacities are $C^1 = 32$ GB, $C^2 = 20$ $\mathrm{GB}^D$ and $C^3 = 25$ $\mathrm{Mbps}^D$). Figure \ref{fig4}(b) shows the sum of instantaneous allocations $\sum_{i=1}^{n} x_i^j(k)$ of resource $R^j$ for last $40$ time steps. We observe that the sum of instantaneous allocations are concentrated around the respective capacities. To reduce the overshoots of total allocations of resource $R^j$, we assume $\gamma^j < 1$ and modify the algorithm of control unit to broadcast the capacity event signal $S^j(k)=1$ when $\sum_{i=1}^{n} x_i^j(k) > \gamma^j C^j$, for all $j$ and $k$.
%
%
 Furthermore, the number of capacity events is the communication overhead of the system to reach the consensus of derivatives of all cameras with respect to a particular resource, which is illustrated in the Figure \ref{fig4}(c) for several simulations. For example, the number of capacity events broadcast by the control unit in a simulation running for $30000$ time steps are $11427, 11988$ and $8355$, for resources $R^1,R^2$ and $R^3$, respectively, which are the communication overhead of the system in bits for the respective resource. Notice that the communication overhead is very low for each resource. It is also observed that the number of capacity events increases approximately linearly with time steps for different simulations.

\section{Conclusion} \label{conc} In this paper a distributed algorithm is proposed. The algorithm solves the multi-variate optimization problems for capacity constraint problems in a distributed manner. It is done by extending a variant of AIMD algorithm. The features of the proposed algorithm are; it involves little communication overhead, there is no inter-device communication needed and each Internet-connected device has its own private cost functions. It is shown in the paper that the long-term average allocation of resources converge to approximately same values as if the optimization problem under consideration is solved in a centralized setting.

 It is interesting to solve the following open problems: first is to provide a theoretical basis for the proof of convergence and second is to find the bounds for the rate of convergence, and its relationship with different parameters or the number of occurrence of capacity events. The work can also be extended in several application areas like Cloud computing, smart grids or wireless sensor networks, where sensors have very limited processing power and battery life.

\section{Acknowledgment}
 The work is supported partly by Natural Sciences and Engineering Research Council of Canada grant RGPIN-2018-05096 and by Science Foundation Ireland grant 16/IA/4610.
\bibliographystyle{IEEEtran}
\bibliography{DistOpt_bib} 
\end{document}